\newtheorem{lemma}{Lemma}
\begin{document}

\title{Analysis of a rate-adaptive reconciliation protocol and the effect of the leakage on the secret key rate}
\author{David~Elkouss}
\email{delkouss@mat.ucm.es}
\affiliation{Departamento de Analisis Matem\'{a}tico and Instituto de Matem\'{a}tica Interdisciplinar, Universidad Complutense de Madrid, 28040, Madrid, Spain}
\author{Jesus~Martinez-Mateo}
\author{Vicente~Martin}
\email{\{jmartinez, vicente\}@fi.upm.es}
\affiliation{Facultad de Inform\'{a}tica, Universidad Polit\'{e}cnica de Madrid (UPM) Campus de Montegancedo, 28660 Boadilla del Monte (Madrid), Spain}

\begin{abstract}

Quantum key distribution performs the trick of growing a secret key in two distant places connected by a quantum channel. The main reason is that the legitimate users can bound the information gathered by the eavesdropper. In practical systems, whether because of finite resources or external conditions, the quantum channel is subject to fluctuations. A rate adaptive information reconciliation protocol, that adapts to the changes in the communication channel, is then required to minimize the leakage of information in the classical postprocessing. 

We consider here the leakage of a rate-adaptive information reconciliation protocol. The length of the exchanged messages is larger than that of an optimal protocol; however, we prove that the min-entropy reduction is limited. The simulation results, both on the asymptotic and in the finite-length regime, show that this protocol allows to increase the amount of distillable secret key. 

\end{abstract}

\maketitle

\section{Introduction}

Claude E. Shannon published his seminal ``A mathematical theory of communications''~\cite{Shannon_48} in 1948 after eight years of intermittent work~\cite{Gallager_01}. The paper meant the birth of communications and coding theory. Shannon did not only establish the frame under which communications systems could be studied and compared, he also proved their fundamental limits, i.e. the limiting rates for data compression and reliable transmission through noisy channels. This second result was specially surprising since there was no certainty that reliable transmission with a positive rate was even possible~\cite{MacKay_03}.

A year later, in 1949, Shannon's ``Communication theory of secrecy systems''~\cite{Shannon_49} came to light. In words of Robert Gallager ``Shannon's cryptography work can be viewed as changing cryptography from an art to a science''~\cite{Gallager_01}. Shannon successfully applied the tools developed in~\cite{Shannon_48} to the problem of transmitting confidential messages through public channels. His main conclusion is that a message from a set of messages sent through a public channel can be obfuscated into a cypher-text with the help of a secret key in such a way that the number of possible originating messages is the whole set of messages, that is, the cypher-text leaks no information to a possible eavesdropper. The condition for this to happen is that the number of secret keys is equal or greater than the number of messages. This condition only applies to eavesdroppers with unbounded resources, if we limit the storage or computing capability of the eavesdropper secret communications are possible without fulfilling the condition. It is evident that computing power resources that today might be considered as out of reach might become available in the near future. There is an implicit risk in assuming that an eavesdropper is limited in any way beyond the fundamental limits that physics impose her, therefore the interest in establishing the scenarios in which some kind of security can be achieved without any assumption is self-evident.

The distribution of secret keys or SKD is a problem closely related to confidential communications. Two parties sharing a secret key can communicate privately through a channel in the conditions discussed in the previous paragraph. We can then study the problem of secret key sharing as a way to achieve confidential communications. The main idea is that two distant parties can agree in a secret key if they have access to a shared source of randomness. The randomness source can take many incarnations, e.g. in the form of a source received from a trusted party or in the form of a noisy channel~\cite{Maurer_93}.

In most of the SKD scenarios the legitimate parties obtain instances of correlated sources which means that they obtain similar but not identical strings. It is then assumed that there is an authentic though otherwise public channel available to all parties---including the eavesdropper. The legitimate parties can exchange additional information through this channel in order to reconcile their strings. They can do so by revealing some information about them, for instance the parities of carefully chosen positions. This process is known as information reconciliation~\cite{Brassard_94}. It is not hard to see that the information exchanged through the public channel reduces the uncertainty that the eavesdropper has on the strings of the legitimate parties. Thus, a reduction in the leakage due to information reconciliation allows to increase the amount of distillable secret key. A second step known as privacy amplification is then needed~\cite{Bennett_95}. In the privacy amplification step the legitimate parties agree on a secret but shorter key of which the eavesdropper has a negligible amount of information.

These mathematical models can have a real, i.e. physical correspondence. One such a model is a physical fiber carrying single photons randomly polarized in one of  two non-orthogonal basis~\cite{Bennett_84}. Quantum key distribution (QKD) is probably the main practical application of SKD. In a QKD protocol \cite{Bennett_84, Gisin_02, Scarani_09}, two legitimate parties, Alice and Bob, aim at sharing an information theoretic secret key, even in the presence of an eavesdropper Eve. In the quantum part of such a protocol, Alice and Bob exchange quantum signals, e.g. single photons, which carry classical information. For instance, Alice encodes a classical bit onto the polarization or the phase of a photon and sends this photon to Bob who measures it. In any realistic implementation of a QKD protocol, the strings obtained after the exchange of the quantum signals suffer discrepancies mainly due to losses in the channel and noise in Bob's detectors but which are conservatively attributed to the action of an eavesdropper. Therefore, any QKD protocol must include the classical post-processing steps described above in order to extract a secret key from the correlated strings.

The channel connecting Alice and Bob in a real system may substantially vary over time. The motivation of this work is to analyze the $sp$-protocol \cite{Elkouss_11}, an information reconciliation protocol that adapts to this channel variations. We had previously showed that in a classical repetition scenario (i.e. with classical attackers and independent, identically distributed sources) its reconciliation efficiency is only limited by the quality of the error correcting code used to implement the protocol \cite{Elkouss_10}. We consider here the leakage of the $sp$-protocol with a quantum eavesdropper, both in the asymptotic and in the finite-length regime, and its impact on the amount of distillable secret key.

\section{Preliminaries and Notation}
\label{sec:notation}

Let $X$ be a discrete random variable taking values in the finite alphabet $\mathcal X$. The Shannon entropy~\cite{Shannon_48}, min-entropy and max-entropy~\cite{Renyi_60} of $X$ are respectively defined by:

\begin{equation}
\label{eq:entr}
H(X) = - \sum_{x \in {\mathcal X}} p_X(x) \log p_X(x)
\end{equation}

\begin{equation}
\label{eq:entrinfity}
H_{\infty}(X) = \min_{x \in {\mathcal X}} \left( - \log  p_X(x) \right)
\end{equation}

\begin{equation}
\label{eq:entr0}
H_0(X) = \log | x \in {\mathcal X} : p_X(x)>0 |
\end{equation}

\noindent where $|\cdot|$ stands for the cardinality of a set. Logarithms in Eq.~(\ref{eq:entr}) to (\ref{eq:entr0}) and throughout the text are taken base two. It holds that $H_{\infty}(X) \leq H(X) \leq H_0(X)$, and the equality occurs when the outcomes in $X$ are given by a uniform distribution.

Now let $X$ and $Y$ be two jointly distributed discrete random variables taking values on alphabets $\mathcal X$ and $\mathcal Y$, respectively. The conditional entropy, min-entropy and max-entropy of $X$ given $Y$ is defined by:

\begin{equation}
H(X|Y) = \sum_{y \in {\mathcal Y}} H(X|y)
\end{equation}

\begin{equation}
H_{\infty}(X|Y) = \min_{y \in {\mathcal Y}} H_{\infty}(X|y)
\end{equation}

\begin{equation}
H_0(X|Y) = \max_{y \in {\mathcal Y}} H_0(X|y)
\end{equation}

\noindent where the entropy of a random variable given an event is the entropy of the induced random variable.

Let the state of a finite dimensional quantum system be represented by a trace one, positive semidefinite, operator on a (finite dimensional) Hilbert space $\mathcal H$. We denote by $\mathcal P(\mathcal H)$ the set of all states acting on $\mathcal H$.

Let us give some basic definitions about the quantum counterparts of these classical information measures. The equivalent of the entropy of a random variable is the von Neumann entropy of a state $\rho_X$ \cite{Vonneumann_32}. It is defined as:

\begin{equation}
H(X)_{\rho_X}  = -\mathrm{tr}(\rho_X \log \rho_X)
\end{equation}

\noindent where $\mathrm{tr}$ denotes the trace operation and we indicate with a subscript the state on which the entropy is computed. Henceforth it will be explicitly written whenever it helps clarifying a statement.

Let $\rho_{XY} \in \mathcal P(\mathcal H_{X} \otimes \mathcal H_{Y})$  be a bipartite quantum state. The conditional quantum min-entropy of $\rho_{XY}$ given $\mathcal H_{Y}$ is defined as:

\begin{equation}
\label{eq:min-quantent}
H_{\infty}({X}|{Y}) = \sup_{\sigma_{Y}}\left( -\log \min \{ \lambda | \lambda \, \textrm{id}_{X} \otimes \sigma_{Y} \geq \rho_{XY} \} \right)
\end{equation}

\noindent where $\lambda > 0$.

If $\mathcal H_{Y}$ is one dimensional:

\begin{equation}
H_{\infty}({X}|{Y}) = H_{\infty}({X}) = -\log \lambda_{\max}(\rho_X)
\end{equation}

\noindent where $\lambda_{\max}(\rho_{X})$ outputs the maximum eigenvalue of $\rho_{X}$.

We finally consider the smooth generalization of the conditional min-entropy introduced in \cite{Renner_05}. Let $\{\rho,\sigma\} \in \mathcal P(\mathcal H)$, the trace distance between $\rho$ and $\sigma$ is given by:

\begin{equation}
\frac{1}{2}||\rho-\sigma||_{1} = \mathrm{tr} \left( |\rho-\sigma| \right)
\end{equation}

The smooth entropy was first defined as an optimization over all states $\varepsilon$-close in terms of the trace distance. The smooth entropies have been redefined in terms of other measures such as the purified distance and verify additional properties \cite{Tomamichel_10, Tomamichel_11} but for the present study it suffices to consider the original definition.

Let $\rho_{XY} \in \mathcal P(\mathcal H_{X} \otimes \mathcal H_{Y})$ and $\varepsilon \geq 0$. The smooth version of Eq.~(\ref{eq:min-quantent}) is given by:

\begin{equation}
H_{\infty}^{\varepsilon}(X|Y)_{\rho_{XY}} = \sup_{\hat\rho_{XY}} H_{\infty}(X|Y)_{\hat\rho_{XY}}
\end{equation}

\noindent where the supreme is found over all $\hat\rho_{XY}$ such that $\frac{1}{2}||\rho_{XY} - \hat\rho_{XY}||_{1} \leq \varepsilon$.

\section{Information Reconciliation}

\subsection{Impact of information reconciliation on the secret key length}

One common assumption in a SKD protocol is that all the parties have access to the outcomes of an independent identically distributed experiment repeated many times. If this assumption holds the parties can safely regard an average behavior as the law of large numbers guarantees that the joint outcome will be typical with high probability. However, assuming a repetition scenario might be unrealistic in some situations, in these cases key distillation can be considered for a finite number of outcomes of a joint experiment. This second, more restrictive, scenario is sometimes referred as finite-key distillation. Both the repetition \cite{Scarani_09} and the finite-key \cite{Tomamichel_12, Hayashi_12, Salas_13} scenarios have been addressed in QKD.

The secrecy of a key $K$ can be measured in terms of its closeness to a perfect one which is uniformly random and decoupled from the eavesdropper's system $Z$. A key $K$ is considered $\varepsilon$-secure if~\cite{Konig_07}:

\begin{equation}
\frac{1}{2} \left|\left| \rho_{KZ} - \tau_K \otimes \rho_Z \right|\right|_{1} \leq \varepsilon
\end{equation}

The communications on the public channel might be one-way or two-ways. We have chosen to restrict the channel to one-way communications since our focus is on practical protocols with reduced distillation complexity, network requirements, etc. However, it should be noted that two-way communications can be used to distill a key in scenarios where one-way secret key distillation is not possible~\cite{Maurer_93} and, in general, the amount of distillable secret key with two-way communications can be strictly higher than with one-way communications~\cite{Gottesman_03, Watanabe_07}.

In the repetition scenario and aided with one-way classical communications, the maximum rate at which key can be extracted with $\varepsilon$ approaching zero as the number of repetitions goes to infinity is given by~\cite{Devetak_05}:

\begin{equation}
\label{eq:ckqkd}
K = H(X|Z) - H(X|Y)
\end{equation}

\noindent where $X$ and $Y$ are classical systems available to the legitimate parties Alice and Bob and $Z$ is a quantum system at the eavesdropper's site. 
The first term at the rhs of Eq.~(\ref{eq:ckqkd}) amounts to the randomness that can be extracted which is independent of $Z$ while the second term can be regarded as the information that Alice and Bob should exchange to reconcile $X$ and $Y$.

Eq.~(\ref{eq:ckqkd}) is valid only in the asymptotic case. However, a real system has only access to finite resources, which means that Alice and Bob not only have bounded computational power but also they have to distill a secret key from a finite number of experiments. Thus, in general there is no convergence toward an ideal key and security has to be considered for an acceptable security threshold $\varepsilon$. 

Let us assume that Alice and Bob exchange $N$ signals out of which they use $m$ for estimating their correlations and $t \leq N-m$ for key distillation. If the correlations do not verify some conditions Alice and Bob abort the protocol, $\varepsilon_\mathrm{PE}$ represents the probability that the parameter estimation procedure fails.  

Given some reconciliation protocol, $C$ stands for the set of all possible reconciliation messages and $\varepsilon_{EC}$ represents the maximum probability that the estimate at Bob's site does not coincide with Alice's string.

Let $\varepsilon_\mathrm{PA}$ represent the failure probability in the privacy amplification procedure, and $\bar\varepsilon$ be a smoothing parameter, then the rate at which the legitimate parties can distill an $\varepsilon$-secure key is bounded by~\cite{Scarani_08}:

\begin{equation}
\label{eq:flqkd}
K^{\varepsilon} \leq \frac{1}{N} \left( H_{\infty}^{\bar\varepsilon}({X^t}|{Z^NC}) - 2 \log \frac{1}{\varepsilon_\mathrm{PA}} \right)
\end{equation}

\noindent where $\varepsilon = n_\mathrm{PE} \varepsilon_\mathrm{PE} + \varepsilon_\mathrm{EC}+ \varepsilon_\mathrm{PA} +\bar\varepsilon$, and $n_\mathrm{PE}$ is the number of estimated parameters.

The smooth min-entropy in Eq.~(\ref{eq:flqkd}) can be evaluated to measure the net impact of information reconciliation~\cite{Scarani_08}:

\begin{equation}
\label{eq:hmin-sq}
H_{\infty}^{\bar\varepsilon}({X^t}|{Z^NC}) \geq H_{\infty}^{\bar\varepsilon}({X^t}|{Z^N}) - \textrm{leak}
\end{equation}

\noindent where $\textrm{leak}$ is a purely classical term that tracks the amount of information correlated with ${X^t}$ revealed during reconciliation. It is given by \cite{Renner_05}:

\begin{equation}
\textrm{leak} = H_0(C) - H_\infty(C|X^t)
\end{equation}

The main effect of an imperfect reconciliation is a reduction of the secret key rate, which in turn, in terms of the figures of merit of a QKD protocol, limits the distance range over which secret keys can be distilled~\cite{Peev_09, Scarani_09}.

\subsection{Fundamental limits of information reconciliation}
\label{sec:irec}

Let Alice and Bob be two parties holding ${x}$ and ${y}$, two $n$-length strings that are respectively $n$ outcomes of two jointly distributed random variables $X$ and $Y$. A one-way reconciliation protocol on the strings $x$ and $y$ is a protocol that produces the strings ${s_x}$ and ${s_y}$ from $x$ and $y$, respectively, after exchanging the message $c(x)$ through the public channel. 

A reconciliation protocol is considered $\varepsilon$-robust~\cite{Brassard_94} if:

\begin{equation}
\sum_{x \in \mathcal X^n, y \in \mathcal Y^n} p(x,y) p(s_x \neq s_y) \leq \varepsilon
\end{equation}

The efficiency of a reconciliation protocol can be measured using a quality parameter $\xi^\varepsilon$ that compares the amount of disclosed information with the minimum theoretical disclosure:

\begin{equation}
\xi^\varepsilon = \frac{\textrm{leak}}{nH(X|Y)}
\label{eq:efficiency}
\end{equation}

\noindent the minimum $nH(X|Y)$ is known as the Slepian-Wolf bound; it delimits the minimum rate for reliably describing a source $X$ to a distant party with access to side information $Y$~\cite{Slepian_73}.

It is well known the appropriateness of (linear) error correcting codes for the Slepian-Wolf problem~\cite{Zamir_02}. In consequence, good error correcting codes can be used for information reconciliation. Let $\mathcal{C}(n,k)$ be a linear code with coding rate $R_0=k/n$, a message of length $n-k$ called the \textit{syndrome} \footnote{Let $H$ be a parity check matrix of the code $\mathcal C$ and $x$ a vector of length $n$ the syndrome of $x$ is the vector $s(x)=H\cdot x$ of length $n-k$.} can be used to reconcile two sources with conditional entropy $nH(X|Y)$. Even if $n-k$ is greater than the theoretical minimum, for finite lengths there is always non-zero error probability. We denote the rate of decoding errors or frame error rate (FER) by the parameter $\varepsilon$. Then, a reconciliation protocol based on sending the syndrome of a linear code is $\varepsilon$-robust, and the reconciliation efficiency is given by:

\begin{equation}
\label{eq:efficiency-source-coding}
\xi^\varepsilon_{\mathcal{C}} = \frac{n-k}{n H(X|Y)} = \frac{1-R_0}{H(X|Y)}
\end{equation}

However, an acceptable FER in a communications protocol might not be sufficient in a security context. It is a common practice to divide the reconciliation process into two steps \cite{Fung_10, Tomamichel_12}. In the first one, a common string is produced, for instance using an error correcting code as we just described. In the second one, Alice uniformly at random chooses a function $f$ from a family of 2-universal hash functions \cite{Wegman_81} and computes a hash of her string $f(s_x)$. Alice sends to Bob her choice $f$ together with $f(s_x)$. Bob computes his own hash value $f(s_y)$ and the protocol aborts if $f(s_x) \neq f(s_y)$. Since the choice of the hash function is independent of $X$, only the length of the hash $\lceil - \log \varepsilon_\mathrm{EC} \rceil$ for some $\varepsilon_\mathrm{EC}>0$ is added to the leakage:

\begin{equation}
\textrm{leak}_{\mathcal C}^{\varepsilon_\mathrm{EC}} = n (1 - R_{0}) + \lceil \log \frac{1}{\varepsilon_\mathrm{EC}} \rceil
\end{equation}

The joint reconciliation process is $\varepsilon_\mathrm{EC}$-robust where $\varepsilon_{EC}$ can be chosen to be much smaller than the FER. 

It is clear from Eq.~(\ref{eq:efficiency-source-coding}) that the length of the conversation when using a code is fixed to $n-k$. That is, the amount of information does not adapt to the error rate in the channel. This is a perfect solution for the Slepian-Wolf problem since the correlations are fixed and known beforehand. However, in QKD it is common that the error rate varies from one execution to the next. In consequence, an adaptation of the coding rate is needed in order to use linear codes for reconciliation.

\section{Study of a rate-adaptive protocol}

\label{sec:rate-modulation}

In this section we study the efficiency and impact of a rate-adaptive protocol, which is in essence the $sp$-protocol in \cite{Elkouss_11} with an additional error verification step.




\subsection{Description of the rate-adaptive protocol}
\label{sec:protocol}

In the following we detail the steps of a rate-adaptive information reconciliation protocol.

{\it Step 0: Pre-conditions}. Alice and Bob agree on the following parameters: (i) a pool of shared mother codes of length $n$, constructed for different rates;  (ii) $d$ the maximum number of symbols (bits) that will be used to adapt the coding rate, and (iii) the target $\varepsilon_\mathrm{EC}$ which characterizes the length of the hashes.

{\it Step 1: Raw key exchange}. Alice and Bob obtain two correlated strings $x$ and $y$, respectively, of length $n-d$ and a precise estimate of the error rate $p_e$. If $p_e$ is outside their target rates they abort the protocol. Otherwise, both parties select the appropriate code $\mathcal{C}$ and compute the adequate number of symbols (bits) $s$  to reveal, with $s<d$, such that the coding rate is then adapted to $p_e$.

{\it Step 2: Coding}. Alice creates a extended string $\hat{x}$ 
of length $n$ by concatenating $x$ and $x'$, a uniformly random string of length $d$. Alice sends to Bob the hash value $f(\hat{x})$, the syndrome of $\hat{x}$ on $\mathcal{C}$ and the values and positions of $s$ symbols among the $d$ symbols randomly generated.

{\it Step 3: Decoding}. Bob creates a extended string of length $n$ by concatenating $y$ and $y'$, a uniformly random string of length $d$. Bob sets the values of the received $s$ symbols to their correct value. Bob computes $\hat{y}$ his estimate of $\hat{x}$ and $f(\hat{y})$ his own hash value. If $f(\hat{y}) \neq f(\hat{x})$ they abort the protocol.

We would like to remark that in {\it Step 2} both the verification tag and the reconciliation message are jointly encoded and sent to Bob. There is no extra interactivity coming from error verification, still only one message is exchanged for reconciliation and a second one from Bob to Alice is sent to notify the success or failure of the protocol.

\subsection{Leakage}
\label{sec:security}

The $sp$-protocol creates an extended system $X^tX'$ by adding $d$ symbols (bits) with random values. The Slepian-Wolf bound implies that for successful reconciliation the length of the reconciliation message should be greater than:

\begin{equation}
H(X^tX'|Y^t)=H(X^t|Y^t)+d
\end{equation}
\noindent which is trivially larger than $H(X^t|Y^t)$ if $d>0$. 

However, the appropriate comparison is in terms of the conditional smooth entropy on the reconciled system, since it is the magnitude that limits the distillable key after the reconciliation step. Lemma~\ref{lm:quantumsmoothentropy} shows that the smooth min-entropy decrease produced by the $sp$-protocol on the extended system is equivalent to the decrease produced by an error correcting code with rate R on the original system. This equivalent coding rate $R$ is given by:

\begin{equation}
\label{eq:adapted-rate}
R=\frac{k-s}{n-d}
\end{equation}

The dependence of $R$ on $d$ and $s$ allows to understand how the protocol adapts the amount of information disclosed for reconciling errors. Since the value of $d$ is fixed previous to the execution of the protocol, it is $s$, the number of symbols (bits) revealed to Bob on the public channel, the parameter available to Alice for modulating the coding rate. A higher value of $s$ increases the information available to the decoder allowing to reconcile noisier strings, while a lower value of $s$ allows to reduce the leakage by increasing the coding rate. On the other hand, $d$ sets the range of achievable rates, from $(k-d)/(n-d)$ to $k/(n-d)$. The extremal values correspond to the limiting cases of revealing the $d$ symbols (bits) and revealing no information on the public channel.

\begin{lemma}
\label{lm:quantumsmoothentropy}

Let $\rho_{X^tZ^N}$ be a bipartite state and $\sigma_{X^tX'Z^NC}$ the extension resulting from the application of the $sp$-protocol. Then the smooth min-entropy of the extended system $X^tX'$ given $Z^NC$ can be bounded by:

\begin{equation*}
H_{\infty}^{\varepsilon}(X^tX'|Z^NC)_{\sigma} \geq H_{\infty}^{\varepsilon}(X^t|Z)_{\rho} - t(1-R)- \lceil \log \frac{1}{\varepsilon_{EC}} \rceil
\end{equation*}

\end{lemma}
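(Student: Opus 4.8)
The plan is to split the claim into three independent pieces: a chain rule for the smooth min-entropy under disclosure of a classical message, the effect on the min-entropy of appending an independent uniform register, and a direct count of the symbols that constitute the reconciliation message $C$. First I would invoke the chain rule of Eq.~(\ref{eq:hmin-sq}), applied to the extended system $X^tX'$ rather than to $X^t$:
\begin{equation*}
H_{\infty}^{\varepsilon}(X^tX'|Z^NC)_{\sigma} \geq H_{\infty}^{\varepsilon}(X^tX'|Z^N)_{\sigma} - \mathrm{leak},
\end{equation*}
with $\mathrm{leak}=H_0(C)-H_{\infty}(C|X^tX')$.

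Next I would lower-bound the first term on the right. Since the protocol draws $X'$ as $d$ uniform symbols independent of $X^t$ and of $Z^N$, I can smooth the $X^tZ^N$ block optimally and tensor it with the untouched uniform state on $X'$; the trace distance is preserved under this tensoring and the min-entropy is additive on the product, giving
\begin{equation*}
H_{\infty}^{\varepsilon}(X^tX'|Z^N)_{\sigma} \geq H_{\infty}^{\varepsilon}(X^t|Z)_{\rho} + d.
\end{equation*}

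The decisive step is to bound the leak. The message $C$ is made of the syndrome of $\hat{x}$ on $\mathcal C$ ($n-k$ symbols), the $s$ disclosed symbols of the pad $X'$, and the verification hash. The syndrome and the disclosed symbols are deterministic functions of $X^tX'$, so they contribute $n-k$ and $s$ to $H_0(C)$ but nothing to $H_{\infty}(C|X^tX')$; the positions of the disclosed symbols and the choice of hash function are independent of $X^tX'$ and hence cancel between $H_0(C)$ and $H_{\infty}(C|X^tX')$; and the verification tag contributes exactly $\lceil \log(1/\varepsilon_{EC}) \rceil$, as already established for the error-verification step. Together these give
\begin{equation*}
\mathrm{leak} \leq (n-k) + s + \left\lceil \log \frac{1}{\varepsilon_{EC}} \right\rceil.
\end{equation*}

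Substituting the three estimates and recalling $t=n-d$ together with $R=(k-s)/(n-d)$ from Eq.~(\ref{eq:adapted-rate}), one has $t(1-R)=(n-k)+s-d$, so that $d-\mathrm{leak}\geq -t(1-R)-\lceil \log(1/\varepsilon_{EC})\rceil$, which is exactly the asserted bound. I expect the leak count to be the main obstacle: one must argue cleanly that revealing $s$ symbols of the pad costs only $s$ against the $d$ units of entropy gained by introducing $X'$, leaving the net saving $-(d-s)$ that is hidden inside $t(1-R)$, and that the position labels and the hash-function choice, being independent of $X^tX'$, genuinely cancel in $H_0(C)-H_{\infty}(C|X^tX')$ rather than inflating the bound. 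The independence-and-smoothing argument in the second step also deserves care, since it is what permits the same smoothing parameter $\varepsilon$ to appear on both sides of the inequality.
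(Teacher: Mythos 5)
Your proposal is correct and follows essentially the same route as the paper's proof: the chain rule of Eq.~(\ref{eq:hmin-sq}) applied to the extended system $X^tX'$, a gain of $d$ from the independent uniform pad, and a leak count of $(n-k)+s+\lceil \log(1/\varepsilon_{EC}) \rceil$, combined via the identity $t(1-R)=(n-k)+s-d$. The only cosmetic difference is that where you establish the middle step directly, by tensoring the optimally smoothed state on $X^tZ^N$ with the untouched uniform state on $X'$ and using additivity of the min-entropy on product states, the paper appends a one-dimensional dummy system $I$ and invokes Renner's superadditivity theorem with a second smoothing parameter $\varepsilon'$ that it then sets to zero.
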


\begin{proof}

\begin{eqnarray}
H_{\infty}^{\varepsilon+\varepsilon'}(X^tX'|Z^NC)_{\sigma} &\geq& H_{\infty}^{\varepsilon+\varepsilon'}(X^tX'|Z^N)_{\sigma} - \textrm{leak} \nonumber \\
&=& H_{\infty}^{\varepsilon+\varepsilon'}(X^tX'|Z^NI)_{\phi} - \textrm{leak} \nonumber \\
&\geq& H_{\infty}^{\varepsilon}(X^t|Z^N)_{\phi} + H_{\infty}^{\varepsilon'}(X'|I)_{\phi} \nonumber \\
&& - \textrm{leak} \nonumber
\end{eqnarray}

Let $\varepsilon' \geq 0$. The first inequality follows from Eq.~(\ref{eq:hmin-sq}) that bounds the impact of the conversation. We can trivially extend the state on $\sigma_{X^tX'Z^N}$ to $\phi_{X^tX'Z^NI} = \sigma_{X^tX'Z^N} \otimes \textrm{id}_I$, where $I$ is a one dimensional system, without changing the value of the smooth min-entropy ($H_{\infty}^{\varepsilon + \varepsilon'}(X^tX'|Z^N)_{\sigma} = H_{\infty}^{\varepsilon + \varepsilon'} (X^tX'|Z^NI)_{\phi}$); the first equality holds by this argument. We can apply Renner's superadditivity theorem in~\cite{Renner_05} for product states to obtain the second inequality. If we now consider just the second and third terms from this last relation we obtain:

\begin{eqnarray}
H_{\infty}^{\varepsilon'}(X'|I)_{\phi} - \textrm{leak} &=& \left( s + p \right) \nonumber\\
            &&- \left( s + n(1-R_0) + \lceil \log \frac{1}{\varepsilon_{EC}} \rceil \right) \nonumber \\
 &=& - t(1-R) - \lceil \log \frac{1}{\varepsilon_{EC}} \rceil \nonumber  
\end{eqnarray}

We can choose $\varepsilon'=0$ and since $I$ is one-dimensional $H_{\infty}(X'|I)_{\phi}$ reduces to $H_{\infty}(X')_{\phi}$. Furthermore, $X'$ is classical and uniformly distributed thus maximizing the min-entropy. The leakage is obtained by tracking the amount of information sent from Alice to Bob during the protocol and subtracting the part that is independent from $X^tX'$.

We recover the desired result if we consider that $\phi_{X^tX'Z^NI}$ is also an extension of $\rho_{X^tZ^N}$ which means that $H_{\infty}^{\varepsilon} (X^t|Z^N)_{\phi} = H_{\infty}^{\varepsilon} (X^t|Z^N)_{\rho}$.

\end{proof}

\section{Simulation Results}
\label{sec:simulation-results}

In this section we compare the tradeoffs between using the $sp$-protocol, non-adapted error correcting codes and \textit{Cascade} (a well-known interactive protocol proposed in~\cite{Brassard_94} and implemented in most QKD systems). First we present the difference of the reconciliation protocols in terms of asymptotic leakage and then we plug them in a QKD protocol and compare the distillable secret key with finite resources.

The strings are assumed to be binary and are modeled as the input and output of a binary symmetric channel (BSC). This is appropriate in the case of some QKD protocols \cite{Bennett_84, Bennett_92, Scarani_04} if errors on the quantum channel are symmetric and independent.

For convenience, we have implemented the rate adaptive $sp$-protocol with irregular binary low-density parity-check (LDPC) codes since there is a wealth of material and information available: a number of matrices, decoding algorithms and communication standards have been proposed in the last years for these codes. However, non-binary LDPC codes \cite{Kasai_10} or other code families \cite{Jouguet_11}, could probably be adapted to implement the $sp$-protocol. We fixed the proportion of modulated symbols to $d/n=5\%$.

Fig.~\ref{fig:leak} shows the leakage rate ($\textrm{leak}_{\mathcal C}^{\varepsilon_\mathrm{EC}}/t$) as a function of the QBER. An optimal protocol achieving the Slepian-Wolf bound (solid line) is compared to the asymptotic $sp$-protocol computed using the theoretical analysis described in the Appendix \ref{ap:dde} (dashed line) and to \textit{Cascade} (dotted line). Note that for \textit{Cascade}, instead of upper bounding the leakage with the analytical estimate given in~\cite{Brassard_94} which might be overly pessimistic, we used as upper bound the leakage rate with large blocks of length $10^6$ (see Appendix \ref{ap:cascade} for numerical justification).

Both \textit{Cascade} and the $sp$-protocol are close to optimal for small QBERs. However, approximately over $3\%$ they begin to diverge and while the former follows closely the Slepian-Wolf bound the latter clearly has a higher leakage. 

\begin{figure}
\includegraphics[width=\linewidth]{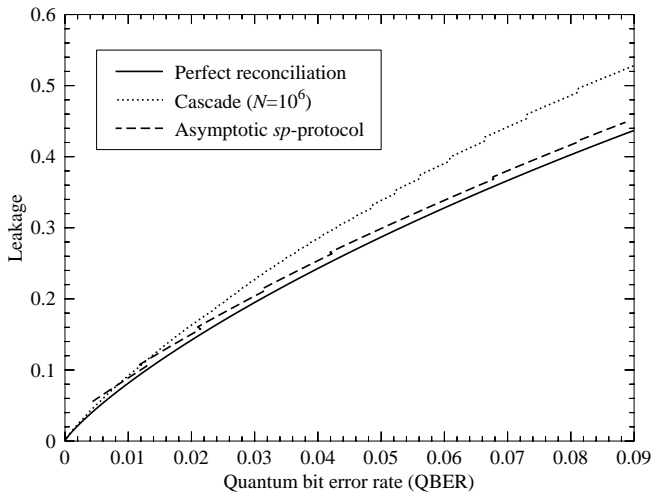}
\caption{The asymptotic leakage of the $sp$-protocol, the leakage of \textit{Cascade} and the leakage for a perfect reconciliation procedure are compared as a a function of the QBER.}
\label{fig:leak}
\end{figure}

To analyze the impact of reconciliation on the achievable secret key rate, we have chosen the prepare and measure version of BB84 and consider for simplicity and in order to highlight the effect of reconciliation, an idealized scenario: we assume that Alice and Bob have access to single photon sources and perfect detectors. Following \cite{Cai_09} the secret key in this setting can be distilled at a rate:

\begin{equation}
K^{\varepsilon} \leq \frac{t}{N} \left( \left( 1 - h(Q) \right) - \Delta(t) - \textrm{leak}/t \right)
\end{equation}

\noindent where $h$ is the binary entropy function, $Q$ is the estimated QBER that takes into account statistical fluctuations due to the finite length case, and $\Delta$ is the smoothing parameter that allows to lower bound the smooth min-entropy in Eq.~(\ref{eq:flqkd}) \cite{Scarani_08}.

\begin{figure}
\includegraphics[width=\linewidth]{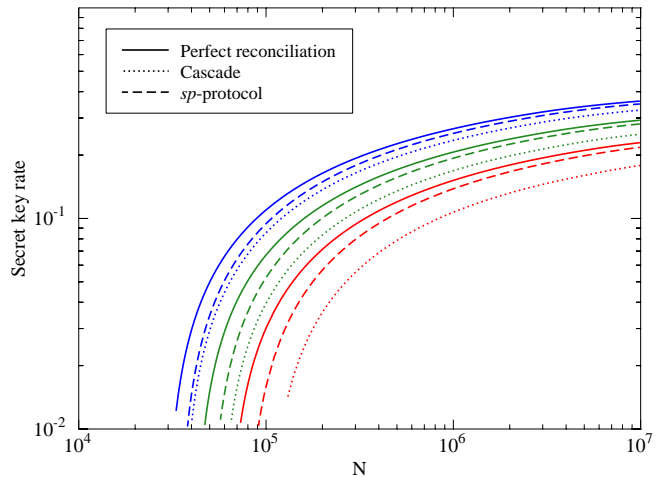}
\caption{(Color online) Secret key rate in the finite-key regime for a perfect reconciliation procedure, the $sp$-protocol, and considering the efficiency of \textit{Cascade}. Three different QBER values are considered (from left to right): $4\%$ (blue), $5\%$ (green) and $6\%$ (red). Other parameters: $\varepsilon=10^{-5}$, $\varepsilon_\mathrm{EC} = 10^{-10}$.}
\label{fig:skd}
\end{figure}

Fig.~\ref{fig:skd} shows the secret key rate as a function of the number of exchanged signals ($N$). We compare in this figure the secret key rate for three different QBER values ($4\%$, $5\%$ and $6\%$) using a perfect reconciliation protocol, \textit{Cascade}, and the $sp$-protocol. The security parameter $\varepsilon$ is set to $10^{-5}$, and $\varepsilon_\mathrm{EC} = 10^{-10}$, as suggested in \cite{Cai_09}. 

The convergence of LDPC codes towards the asymptotic value is slower than that of \textit{Cascade}  (see Appendix \ref{ap:cascade}). In consequence the optimality of the distillable key with this implementation of the $sp$-protocol increases with the length: shifting from close to \textit{Cascade} for small lengths to close to the optimal value asymptotically. For low QBERs and small lengths, the slow convergence of LDPC codes together with the good efficiency of \textit{Cascade} in this region make both secret key rates very similar. For higher QBERs, even for small lengths the LDPC implementation of the $sp$-protocol clearly outperforms \textit{Cascade}.

\section{Discussion}

This paper analyzes some improvements in the classical post-processing of QKD protocols. The key distillation process can be divided in two steps: information reconciliation and privacy amplification. Information reconciliation allows to establish a common string while in the privacy amplification step a shorter but more secure key is created. Both steps are highly coupled: in essence every bit exchanged in the information reconciliation step implies that one additional bit has to be removed of the final key in the privacy amplification step.

The problem of correcting the discrepancies between the strings of the legitimate parties is also known as the problem of source coding with side information by the information theory community. Under this paradigm, the theoretical limits of information reconciliation are given by the Slepian-Wolf bound. Information reconciliation is, then, basically error correction.

We have adopted a pragmatic approach towards error correction and used modern coding techniques well suited for QKD purposes. In a real QKD scenario we have to deal with a broad range of error rates. Further, the number of accesses to the classical public communication channel should be limited. As opposed to the eavesdropper that should, for the sake of security,  be assumed to have access to unbounded resources, the legitimate parties are equipped with a finite amount of resources.

The $sp$-protocol, induced by a mother code of rate $R_0$ allows the legitimate parties to adapt the reconciliation step to varying conditions. However, it exchanges a message longer than the optimal one. We proved that the $sp$-protocol is equivalent to the use of a code with an adapted rate $R$. The claim holds in the sense that the smooth min-entropy reduction of the former in an extended system is bounded by the reduction of the latter in the original system.

We implemented the $sp$-protocol with irregular LDPC codes. The results obtained indicate that the $sp$-protocol asymptotically behaves close to the theoretical limit. We claim no optimality in our implementation of the $sp$-protocol and certainly it could be expected that other code families are better suited to short key lengths or to other kind of correlations different than those modeled by a BSC. The analysis, however, applies to any linear error correcting code. In consequence, it allows to consider rate-adaptive information reconciliation as a specific code design problem. 
We believe that this protocol opens the doors to consider simpler and possibly better schemes for the classical postprocessing in secret key distillation protocols.

\appendix

\section{Cascade simulations}
\label{ap:cascade}

In order to estimate the asymptotic leakage of \textit{Cascade} we simulated the protocol with strings of length $10^4$, $10^5$ and $10^6$. The results on Table \ref{tab:cascadeleak} show that with a string length of $10^6$ the leakage rate has already converged.

\begingroup
\begin{table}[!ht]
    \begin{tabular}{c|c|c|c}
        QBER & $10^4$ & $10^5$ & $10^6$ \\ \hline
        0.01 & 0.0917 & 0.0914 & 0.0914 \\ 
        0.04 & 0.285 & 0.284 & 0.284 \\ 
        0.05 & 0.338 & 0.338 & 0.338 \\ 
        0.06 & 0.390 & 0.390 & 0.390 \\ 
       \hline\hline
    \end{tabular}
\caption{This table shows the leakage rate of \textit{Cascade} for strings of length $10^4$, $10^5$ and $10^6$ as a function of the QBER.}\label{tab:cascadeleak}
\end{table}
\endgroup

\section{Theoretical analysis of rate modulated codes}
\label{ap:dde}

Binary linear codes admit a bipartite graph representation in which symbols are linked with parity checks. An ensemble of irregular binary LDPC codes can be defined by the degree distributions on the edges of symbols and checks~\cite{Richardson_01}. We can study the behavior of an ensemble under a message passing algorithm by tracking the evolution of the message distributions. This recursive tracking is known as density evolution~\cite{Richardson_01} and allows to compute the asymptotic decoding threshold of a code family on a communications channel. In general, densities are updated following this recurrence relation:

\begin{equation}
p^{\ell+1}(x) = \rho \left( p_0(x) * \lambda \left( p^\ell(x) \right) \right)
\label{eq:density-evolution}
\end{equation}

\noindent where $p^{\ell}$ is the average probability on symbols on the decoding iteration $\ell$ if the code graph is tree like, $\lambda(x)$ and $\rho(x)$ are the symbol and check node degree polynomials respectively, $p_0(x)$ is the initial message density, and $*$ stands for convolution.

In section \ref{sec:simulation-results}, we focused our attention in the BSC. This channel is characterized by a single parameter: the crossover probability $\varepsilon$. That is, a bit is either noiselessly transmitted with probability $1-\varepsilon$ or flipped with probability $\varepsilon$. The channel is then modeled by the following initial density distribution:

\begin{eqnarray}
p_{0}(x) &=&  \varepsilon \Delta_{L(\varepsilon)}(x) + (1-\varepsilon) \Delta_{-L(\varepsilon)}(x) 
\label{eq:p0_ps}
\end{eqnarray}

\noindent where $L(\varepsilon) = \log \frac{\varepsilon}{1-\varepsilon}$ is a log-likelihood ratio, and $\Delta_t(x) = \delta (x-t)$ is the Dirac delta function displaced at position $t$.

Now, in the $sp$-protocol, an $n$-length raw string is composed of $n-d$ bits sent through a noisy channel, in this case the above described BSC, and $d$ bits with randomly assigned values out of which $s$ are revealed through the public and noiseless channel. Let $\sigma$ and $\pi,$ stand for the fraction of bits that are completely known and unknown to the decoder, respectively, we can compute the asymptotic behavior of the $sp$-protocol with the following initial density:

\begin{eqnarray}
p_{0}(x) &=& (1 - \pi -  \sigma) \left[ \varepsilon \Delta_{L(\varepsilon)}(x) + (1-\varepsilon) \Delta_{-L(\varepsilon)}(x) \right] \nonumber \\
 && +\pi \Delta_{0}(x) + \sigma \Delta_{\infty}(x)
\label{eq:p0_ps1}
\end{eqnarray}

\section*{Acknowledgment}

This work has been partially supported by the project Quantum Information Technologies in Madrid (QUITEMAD), reference P2009/ESP-1594 and the CHIST-ERA project Composing Quantum Channels, reference PRI-PIMCHI-2011-1071.

\bibliography{secure}

\end{document}